\newtheorem{theo}{Theorem}
\def\Rset{\mathbb{R}}
\newcommand{\comment}[1]{}
\begin{document}
\title{Behavioral modulation of the coexistence between \textit{Apis melifera} 
and \textit{Varroa  destructor}: A defense against colony colapse disorder?}

\titlerunning{Behavioral Defense against Colony Colapse Disorder}        

\author{Joyce de Figueir\'o Santos \and
        Fl\'avio Code\c{c}o Coelho \and  
        Pierre Alexandre Bliman
}


\institute{Joyce de Figueir\'o Santos and Fl\'avio Code\c{c}o Coelho and        
Pierre Alexandre Bliman \at
              Applied Mathematics School -- Getulio Vargas Foundation, Brazil \\
              Tel.: +55-21-37995551\\
              Fax: +55-21-37995917\\
              \email{fccoelho@fgv.br}           
            \and
            Pierre Alexandre Bliman, also at \at
               Inria, France
}

\date{Received: date / Accepted: date}

\maketitle

\begin{abstract}
 
Colony Collapse Disorder has become a global problem for beekeepers and for 
the crops which depend on bee polination. Multiple factors are known to 
increase the risk of colony colapse, and the ectoparasitic mite \textit{Varroa 
destructor} that parasitizes honey bees is among the main threats to colony 
health.
Although this mite is unlikely to, by itself, cause the collapse of hives, it 
plays an important role as it is a vector for many viral diseases. Such diseases 
are among the likely causes for Colony Collapse Disorder.

The effects of \textit{\textit{V. destructor}} infestation are disparate in 
different 
parts of the world. 
Greater morbidity - in the form of colony losses -  has been reported in 
colonies of European honey bees (EHB) in Europe, Asia and North America. 
However, this mite has been present in Brasil for many years and yet there are 
no reports of Africanized honey bee (AHB) colonies losses.

Studies carried out in Mexico showed that some resistance behaviors to the mite 
- especially grooming 
and hygienic behavior - appear to be different in each subspecies. Could those 
difference in behaviors 
explain why the AHB are less susceptible to Colony Collapse Disorder?   

In order to answer this question, we propose a mathematical model of the 
coexistence dynamics of these two species, the bee and the mite, 
to analyze the role of resistance behaviors in the overall health of the 
colony, 
and, as a consequence, its ability to face epidemiological challenges.

\keywords{Honeybees \and Colony Collapse Disorder \and \textit{Varroa 
destructor} \and 
basic reproduction number}
\end{abstract}

\section{Introduction}

Since 2007 American beekeepers reported heavier and widespread losses of bee 
colonies. And this goes beyond American borders --- many Europeans beekeepers 
complain of the same problem. This mysterious phenomenon was called "Colony 
Collapse Disorder" (CCD)  --- the official description of a syndrome in which 
many bee colonies died in the winter and spring of 2006/2007. Diseases and 
parasites, in-hive chemicals, agricultural insecticides, genetically modified 
crops, changed cultural practices and cool brood are pointed as some of the 
possible 
causes for CCD \citep{oldroyd_whats_2007}.

The ectoparasitic mite \textit{Varroa destructor} that parasitize honey bees has 
become 
a global problem and is considered as one of the important burdens on bee 
colonies and a cause for CCD.
The Varroa mite is suspected of having caused the collapse of millions of 
Apis mellifera honey bee colonies 
worldwide. However, the effects caused by \textit{V. destructor} infestation 
vary in 
different parts  of the world. More intense losses have been reported in 
European honey bee colonies (EHB) of Europe, Asia and North America 
\citep{calderon_reproductive_2010}.

The life cycle of \textit{V. destructor} is tightly linked with the bee's. 
Immature 
mites develop together with immature bees, parasitizing them from an early 
stage. The mite's egg-laying behavior is coupled with the bee's and thus 
depends on its reproductive cycle. Since worker brood rearing and thus 
Varroa reproduction occurs all year round in tropical 
climates, it could be expected that the impact of the parasite would be even 
worse in tropical regions. But \textit{\textit{Varroa destructor}} has been 
present in 
Brazil for more than 30 years and yet no 
collapses due to this mite, have been recorded \citep{carneiro_changes_2007}. It 
is worth noting that the dominant variety of bees in Brazil is the Africanized 
honey bee (AHB) which since its introduction in 1956, has spread to the entire 
country\citep{pinto_ectoparasite_2012}.

African bees and their hybrids are 
more resistant to the mite \textit{V.destructor} than European bee 
subspecies \citep{medina_comparative_1999, pinto_ectoparasite_2012}. A review by 
\cite{arechavaleta-velasco_relative_2001} 
in Mexico showed that EHB was 
twice as attractive to V.destructor than AHB. The removal of naturally 
infested brood, which is termed hygienic behavior, was reported as four times 
higher in AHB than in EHB, and AHB workers were more efficient in grooming mites 
from their bodies. 

These  behaviors are important factors  in keeping the mites infestation low in 
the honey bee colonies.

\subsection{Resistance behaviors of the bee against the parasite}

Two main resistance behaviors, namely grooming and hygienic 
behavior\citep{spivak_honey_1996}, are 
mechanisms employed by the honey bees to control parasitism in the hive.

The grooming behavior is when a worker bee is able to groom herself with her 
legs and mandibles to remove the mite and then injure or kill it.
\citep{vandame_levels_2000}.

Hygienic behavior is a mechanism through which worker bee broods are uncapped 
leading to the death of the pupae. This behavior is believed to 
confer resistance to Varroa infestation since worker bees are more likely to 
uncap an infested brood, than an uninfested one. It has been demonstrated that 
the smell of the mite by itself is capable of activating this behavior. 
\citep{correa-marques_uncapping_1998}.

The hygienic behavior serves to combat other illnesses or parasites to which 
the brood is susceptible. It is also not a completely accurate mechanism. 
Correa-Marques and De Jong (\citeyear{correa-marques_uncapping_1998}), report 
that the majority (53\%) of the uncapped cells display apparently no signs of 
parasitism 
or other abnormality which would justify the killing of the brood. Thus, in our 
model we define two parameters for the hygienic behavior: $H_g$ , for the 
generic hygienic behavior, which may kill uninfested pupae, and $h$ for the 
sucess rate in uncapping infested brood cells.

Africanized honey bees have been shown to be more competent in hygienic 
behavior than European honey bees. \citet{vandame_levels_2000} found in Mexico 
that the EHB are able to remove just $8\%$ of infested brood while AHB removed 
up 32.5$\%$.

The main goal of this paper is to propose a model capable of describing 
the dynamics of infestation by V. destructor in bee colonies taking into 
consideration bee's resistance mechanisms to mite infestation --- 
 grooming and hygienic behavior. In addition, through simulations, we show 
how the resistance behaviors contribute to the reduction infestation levels and 
may even lead to the complete elimination of the parasite from the colony.
    
\section{Mathematical model}

 \begin{figure}[!ht]
  \begin{center}
  \includegraphics[scale=0.4]{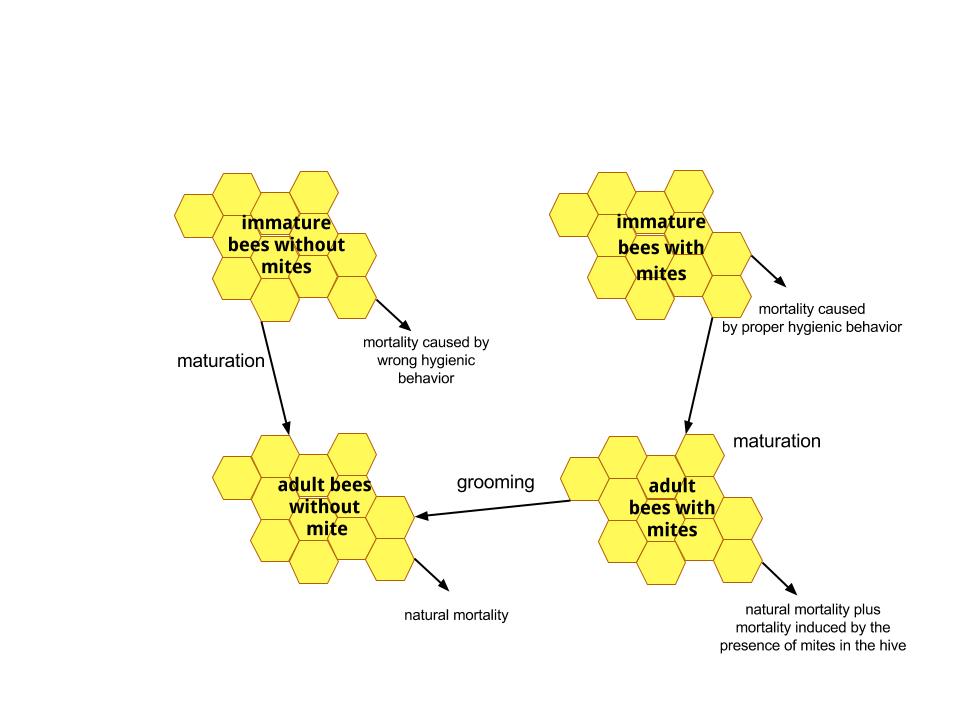} 
  \caption{Diagram to describes the dynamics of the model.}
  \label{fig_diagrama}
  \end{center}  
   
  \end{figure}

Previous work by \cite{ratti_mathematical_2012} models the population dynamics 
of bee and mites together with the acute bee paralysis virus.  
Here we focus solely on the host-parasite interactions trying to 
understand the resilience of colonies in Brazil and the role of the more 
efficient resistance behaviors displayed by AHB to explain the lower 
infestation rates and incidence of collapses in their colonies.

\cite{vandame_parasitism_2002} discusses the cost-benefit of resistance 
mechanism of bee against mite. The \textit{grooming} behavior 
performed by adult bees, includes detecting and eliminating mites from their 
own body (auto-grooming) or from the body of another bee (allo-grooming). The 
hygienic behavior occurs when adult bees detect 
the presence of the mite offspring still in the cells and in order to prevent 
the mites 
from spreading in the colony, the worker bees kill the infested brood.
Their study compared the results for two subspecies of bees - Africanized and 
European - to examine whether these two mechanisms could explain the observed 
low compatibility between Africanized bees and the mite \textit{Varroa 
destructor}, in Mexico.
The results showed that \textit{grooming} and hygienic behavior appears most 
intense in Africanized bees than in Europeans bees.

The model proposed is shown in the diagram of figure \ref{fig_diagrama}, and 
detailed in the system of differential equations below:

\begin{align}
\label{eq:modelo}
\dot{I} &= \pi \frac{A}{A+A_i} - \delta I - H I \nonumber \\
\dot{A} &= \delta I + gA_i - \mu A \nonumber \\
\dot{I_i} &= \pi \frac{A_i}{A+A_i} - \delta I_i - H_i I_i \nonumber \\
\dot{A_i} &= \delta I_i - gA_i - (\mu + \gamma) A_i
\end{align}

In the proposed model, {\bf{$I$}}, {\bf{$I_{i}$}}, {\bf{$A$}} and 
{\bf{$A_{i}$}} 
 represent the non-infested immature 
bees, infested immature bees, non-infested adult worker bees 
and infested adult worker bees, respectively. 

Daily birth rate for bees is denoted by $\pi$, 
$\delta$ is the maturation rate, i.e., the inverse of number of days 
an immature bee requires to turn in adult, this rate is the same for  
both infested and  non-infested immature bees. $\mu$ is the mortality rate for 
adult bees,$\gamma$ is the mortality rate induced by the presence of mites in 
the colony bees.
The parameters $H_i$, $H$ e $g$ are the rate of removal of infested pupae via 
hygienic behavior, the general 
hygienic rate (affecting uninfested pupae)  and grooming rate, respectively.

 \begin{table}[h]

\caption{Parameters of the model.}
\centering

\label{tab_parametros}

\begin{small} 
\begin{tabular}{c|c c c c}

\hline
\textbf{Parameters} & \textbf{Meaning} & \textbf{Value} & \textbf{Unit} &  
\textbf{Reference} \\
\hline
$\pi$ & Bee daily birth rate & 2500 & $bees \times day^{-1}$ & 
\citealp{Embrapa} 
 \\
$\delta$ & Maturation rate & $0.05$ & $day^{-1}$ & \citealp{Embrapa} \\
$H$ & Generic hygienic behavior & - & $day^{-1}$ & - \\
$H_i$ & Hygienic behavior towards infested brood   & - & $day^{-1}$ & - \\
$g$ & \textit{Grooming} & - & $day^{-1}$ & - \\
$\mu$ & Mortality rate & $0.04$ & $day^{-1}$ & \citep{khoury_quantitative_2011} 
\\
$\gamma$ & Mite induced mortality & $10^{-7}$ & $day^{-1}$ & 
\citep{ratti_mathematical_2012}  \\

\hline
\end{tabular}
\end{small}
\end{table}

\subsection*{Choosing parameters}

Some of the parameters associated with the bees life 
cycle, used for the simulations,  
can be found in the literature, as shown in table \ref{tab_parametros}. For the 
resistance behavior parameters, $g$, $H$ and $H_i$, very little information is 
available. 
Therefore we decided to study the variation of these parameters within ranges 
which allowed for the system to switch between a mite-free equilibrium to one 
of coexistence.
These ranges also reflected observations described in the 
literature \citep{mondragon_multifactorial_2005, 
vandame_parasitism_2002, arechavaleta-velasco_relative_2001}.

\begin{table}[h]

\caption{Varying the parameters}
\centering

\label{tab_parametros_var}

\begin{small} 
\begin{tabular}{c|c c}
\hline
\textbf{Parameter} & \textbf{Maximum value} & \textbf{Minimum value} \\
\hline
$g$ & $0.01$ & $0.1$  \\
$H_i$ & $0.08$ & $0.4$ \\
$H$ & $0.04$ & $0.2$ \\
\hline
 
\end{tabular}
\end{small}
\end{table}

 The three unknown parameters representing resistance behaviors $g$, $H_i$, 
$H$ -- grooming,  proper hygienic behavior and wrong hygienic behavior --   
where studied with respect to the existence of a coexistence equilibrium.

\section{Results}

In order to understand the dynamics of the proposed model of mite infestation 
of bee colonies, we proceed to analyze it. 

\subsection{Basic reproduction number of the infested bees }

 An effective way to look at boundary beyond which coexistence of mites and 
bees is possible, is to look at the ${\cal R}_0$ of infestation. 
For our model, the basic reproduction number, or ${\cal R}_0$ of infested bees, 
can 
be thought of as the number of new infestations that one infested bee when 
introduced into the colony generates on average over the course of its 
infestation period or while it is not groomed, in an otherwise uninfested 
population.

 \begin{figure}
  \begin{center}
   \includegraphics[scale=0.5]{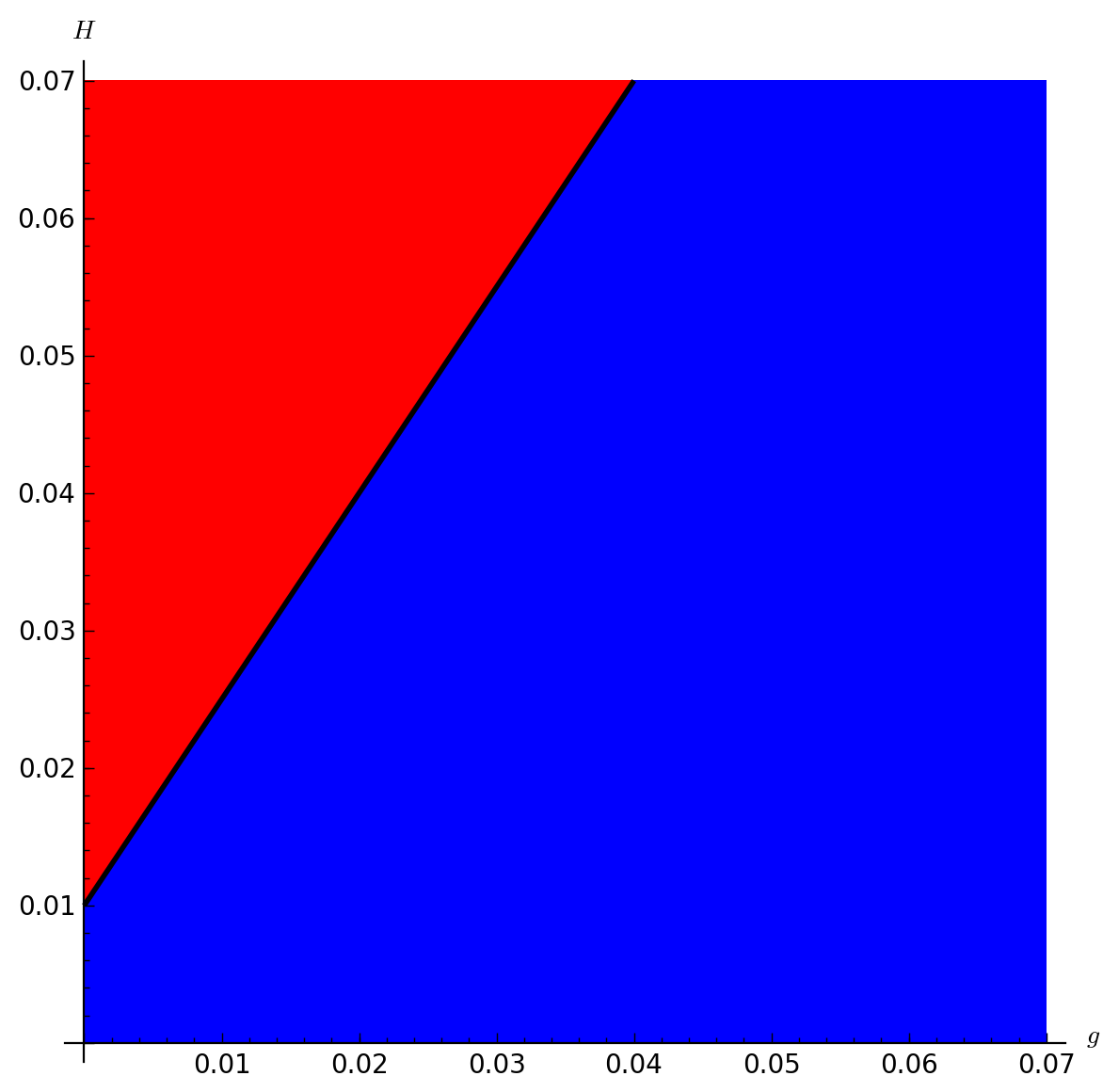} 
  \caption{plot of values of ${\cal R}_0$ for a range of values of $g$ and $H$. 
$H_i=0.01$ and remaining parameters set as described in table 
\ref{tab_parametros}. The region in red corresponds to ${\cal R}_0>1$, the 
black line 
to ${\cal R}_0=0$ and the blue region otherwise.}
  \label{fig_gxhi}
  \end{center}  
   
  \end{figure}

\begin{figure}
  \begin{center}
   \includegraphics[scale=0.5]{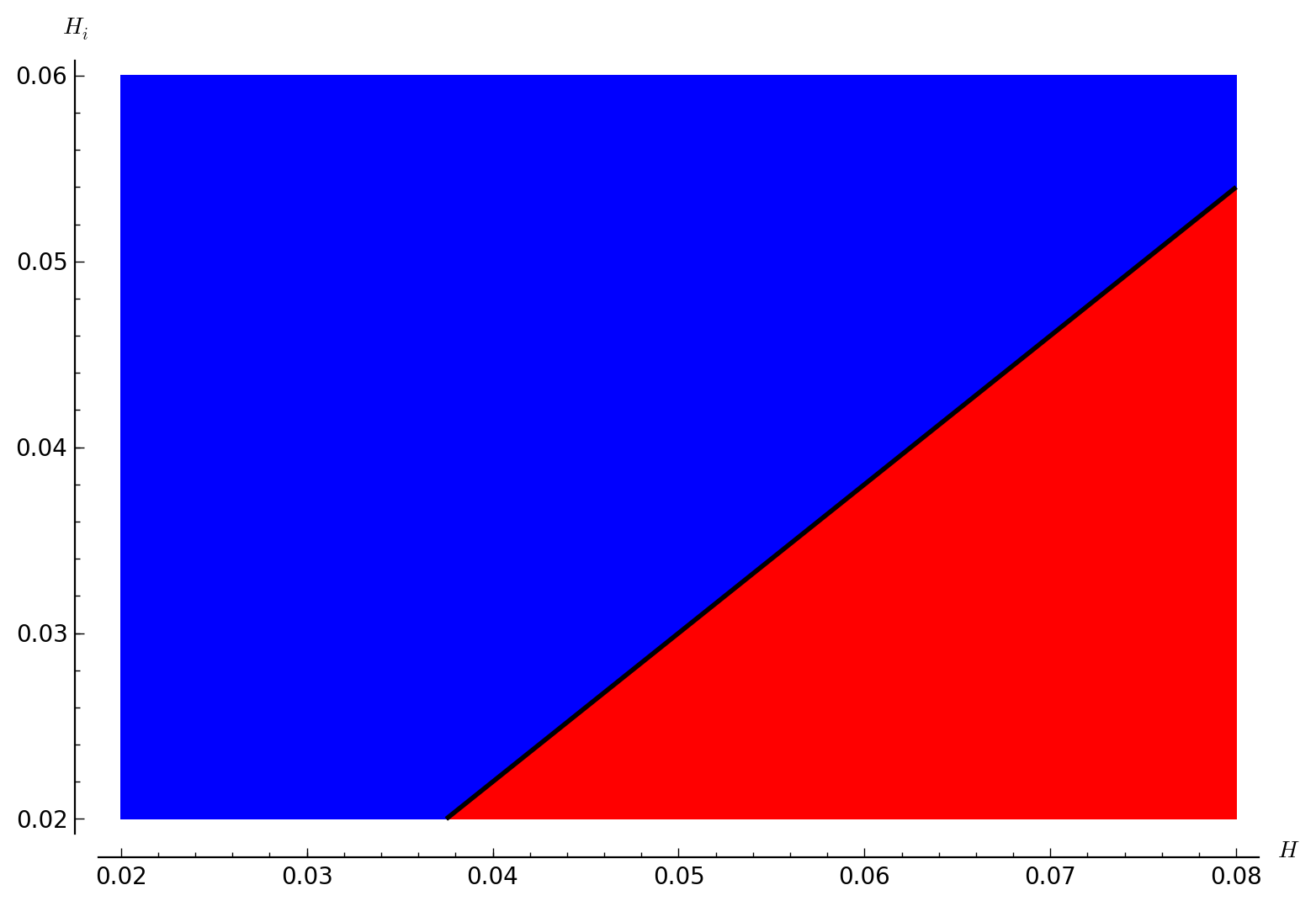} 
  \caption{Values of ${\cal R}_0$ for various combinations of $H_i$ and $H$. 
$g=0.01$ and other parameters as given in table \ref{tab_parametros}.The 
region 
in red corresponds to ${\cal R}_0>1$, the black line 
to ${\cal R}_0=0$ and the blue region otherwise. This figure illustrates one of 
the 
conditions for coexistence(given other parameters values fixed as in table 
\ref{tab_parametros}) that $H$ must be larger than $H_i$. }
  \label{fig_hxhi}
  \end{center}  
   
  \end{figure}

\paragraph{Deriving ${\cal R}_{0}$ using the next generation method: } 

To calculate the basic reproduction number of infested bees, we will use the 
next-generation matrix \citep{van_den_driessche_reproduction_2002},
where the whole population is divided into $n$ compartments in which there are 
$m < n$ infested compartments.

In this method, ${\cal R}_{0}$ is defined as the spectral radius, or the 
largest 
eigenvalue, of the next generation matrix.

Let $x_{i}$, $i=1,2...,m$ be the number or proportion of individuals in the
$ith$ compartment. Then
\begin{center}
$\frac{dx_{i}}{dt} = \mathcal{F}_{i}(x) - \mathcal{V}_{i}(x)$
\end{center}
where $\mathcal{F}_{i}(x)$ is the rate of appearance of new infections in 
compartment
$i$ and $\mathcal V_{i}(x) = \mathcal V_{i}^{-}(x) -\mathcal V_{i}^{+}(x)$. 
Where $\mathcal V_{i}^{-}$ is 
the rate of transfer of individuals out of the
$ith$ compartment, and $\mathcal V_{i}^{+}$ represents the rate of transfer of 
individuals into compartment $i$ by all other means.

The next generation matrix is then defined by $FV^{-1}$, where $F$ and $V$
can be formed by the partial derivatives of $\mathcal F{i}$ and $\mathcal 
V_{i}$.

\begin{center}
$F = [\frac{\partial\mathcal F_{i}(x_{0})}{\partial x_{j}}]$ and $V = 
[\frac{\partial\mathcal V_{i}(x_{0})}{\partial x_{j}}]$
\end{center}

where $x_{0}$ is the disease free equilibrium.

In our model, $m=2$ and the infested compartments are:

\begin{align}
\label{eq:r0eq}
\frac{dI_{i}}{dt} &= \pi \frac{Ai}{A+Ai} - \delta I_{i} - HI_{i} \nonumber \\
\frac{dAi}{dt} &= \delta I_{i} - gAi - (\mu + \gamma) Ai
\end{align}

Now we write the matrices F and V , substituting the mite-free equilibrium 
values, $A^*=\frac{\delta \pi}{\mu(\delta+H)}$ and $A_i^*=0$. 

\[
F = \left[\begin{array}{cc}
	      0 & \frac{\mu(\delta+H}{\delta}) \\
	      0 & 0 \\
	  \end{array}
          \right] \]\label{matrixF}

\[ V = \left[\begin{array}{cc}
\delta + H_i & 0 \\
-\delta & g + \gamma + \mu \\
\end{array}
          \right] \]\label{matrixV}

Let the next-generation matrix $G$ be the matrix product $FV^{-1}$. Then 

\[G = \left[\begin{array}{cc}
\frac{\mu(\delta +H)}{(\delta + H_i)(g + \gamma + \mu)} & \frac{\mu(\delta 
+H)}{\delta(g + \gamma + \mu)} \\
0  & 0 \\
\end{array}
          \right] \] \label{matrixG}

Now we can find the basic reproduction number, ${\cal R}_{0}$, which is the 
largest eigenvalue of the matrix $G$.

\begin{align}
 \label{r0}
 {\cal R}_0 = \frac{\mu(\delta +H)}{(\delta + H_i)(g + \gamma + \mu)} 
\end{align}

 \begin{figure}[h]
  \begin{center}
  \includegraphics[scale=0.7]{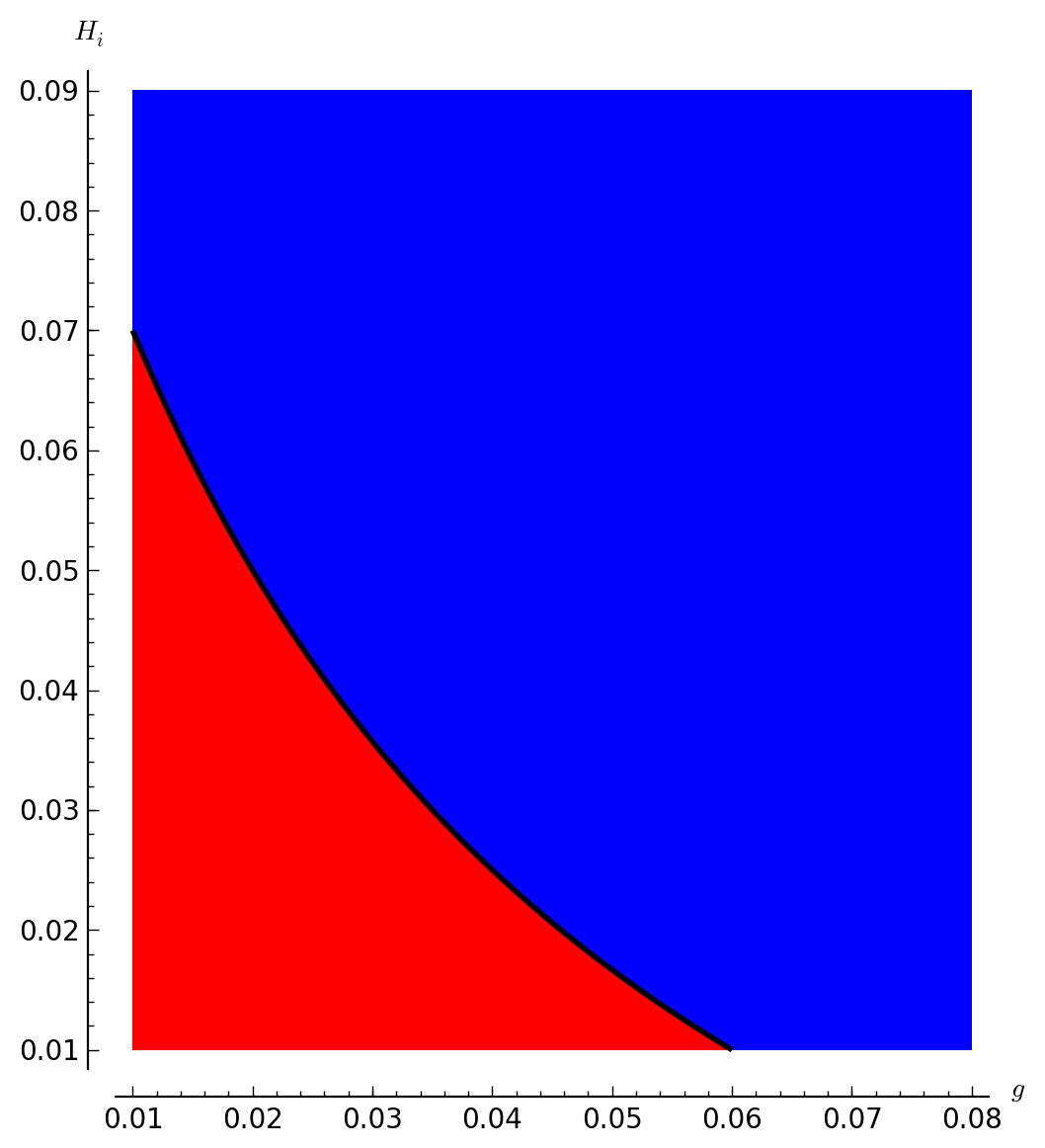} 
  \caption{Implicit plot for ${\cal R}_0$ letting $g$ and $H_i$ vary. 
  Using the values for parameters $\pi$, $\delta$, $\mu$ and $\gamma$
  from table \ref{tab_parametros} 
  The red region represent ${\cal R}_0>1$ which means that for these 
combination of $g$ 
and $H_i$ the mite will stay in the colony. 
  On the other hand, the blue region represents ${\cal R}_0<1$ which means that 
for 
these these combination of $g$ and $H_i$
  the mites will be eliminated.}
  \label{fig_R0}
  \end{center}  
   
  \end{figure} 
  
Figures \ref{fig_gxhi}, \ref{fig_hxhi} and \ref{fig_R0} show the boundary 
between mite-free (blue region, ${\cal R}_0<1$) and coexistence equilibria (red 
region, ${\cal R}_0>1$). 

\subsection{Well-Posed and Boundedness}

For sake of simplicity, we denote
\begin{equation}
\alpha \doteq \delta +H,\qquad \alpha_i \doteq \delta +H_i,\qquad
\mu_i \doteq \mu+\gamma
\end{equation}
in such a way that the system (\ref{eq:modelo}) rewrites
\begin{subequations}
\label{eq0}
\begin{gather}
\label{eq0a}
\dot I = \pi \frac{A}{A+A_i} - \alpha I\\
\label{eq0b}
\dot A = \delta I -\mu A + gA_i \\
\label{eq0c}
\dot I_i = \pi \frac{A_i}{A+A_i} - \alpha_iI_i\\
\label{eq0d}
\dot A_i = \delta I_i -(\mu_i+g)A_i
\end{gather}
\end{subequations}
We assume that all the coefficients presented in table \ref{tab_parametros} are 
all positive, that is:
\begin{equation}
\label{eq13}
\pi, \delta, \mu >0,\qquad \alpha, \alpha_i > \delta, \qquad \mu_i > \mu\ .
\end{equation}
The previous system of equations is written
\begin{equation}
\label{eq1}
\dot X = f(X),\qquad X=(I, A, I_i, A_i)
\end{equation}
The right-hand side of \eqref{eq1} is not properly defined in the points where 
$A+A_i=0$.
However, the following result demonstrates that this has no consequence on the 
solutions, as the latter stays away from this part of the subspace.
For subsequent use, we denote $\cal D$ the subset of those elements $X=(I, A, 
I_i, A_i)\in\Rset_+^4$ such that $A+A_i\neq 0$.

\begin{theo}[Well-posedness and boundedness]
\label{le1}
If $X_0\in\cal D$, then there exists a unique solution of \eqref{eq1} defined 
on 
$[0,+\infty)$ such that $X(0)=X_0$.
Moreover, for any $t>0$, $X(t)\in\cal D$, and
\begin{subequations}
\label{eq3}
\begin{gather}
\label{eq3a}
\frac{\pi}{\alpha_{\max}}
\leq \liminf_{t\to +\infty} (I(t)+I_i(t))
\leq \limsup_{t\to +\infty} (I(t)+I_i(t))
\leq \frac{\pi}{\alpha_{\min}}\\
\label{eq3b}
\frac{\delta\pi}{\mu_i\alpha_{\max}}
\leq \liminf_{t\to +\infty} (A(t)+A_i(t))
\leq \limsup_{t\to +\infty} (A(t)+A_i(t))
\leq \frac{\delta\pi}{\mu\alpha_{\min}}
\end{gather}
\end{subequations}
where by definition $\alpha_{\min} \doteq \min\{\alpha;\alpha_i\}$, 
$\alpha_{\max} \doteq \max\{\alpha;\alpha_i\}$.
Also,
\begin{equation}
\label{eq03}
\frac{1}{(\alpha-\alpha_{\min})\mu+\alpha g}
\frac{\pi g\mu\alpha_{\min}}{\mu_i\alpha_{\max}} \leq \liminf_{t\to +\infty} 
I(t) ,\qquad
\frac{1}{(\alpha-\alpha_{\min})\mu+\alpha g}
\frac{\delta\pi g\alpha}{\mu_i\alpha_{\max}} \leq \liminf_{t\to +\infty} A(t)
\end{equation}
and
\begin{equation}
\label{eq03a}
(I_i(0), A_i(0))\neq (0,0) \Rightarrow \forall t\geq 0,\ I_i(t)>0,\ A_i(t) >0
\end{equation}
\end{theo}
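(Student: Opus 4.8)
The plan is to dispatch the four assertions in their natural order: local well-posedness, positive invariance of $\Rset_+^4$, a set of a priori bounds on aggregated variables (which simultaneously yield global existence, invariance of $\mathcal D$, and the asymptotic estimates), and finally the strict positivity of the infested compartments. First, since the only non-polynomial terms $\pi A/(A+A_i)$ and $\pi A_i/(A+A_i)$ are real-analytic on the open set $\{A+A_i\neq0\}\supset\mathcal D$, the field $f$ is locally Lipschitz there, so Cauchy--Lipschitz gives a unique maximal solution on some $[0,T)$ with $X(0)=X_0$. Positive invariance of $\Rset_+^4$ follows from the quasi-positivity (sub-tangentiality) condition, checked face by face: on $\{I=0\}$ one has $\dot I=\pi A/(A+A_i)\geq0$, on $\{A=0\}$ one has $\dot A=\delta I+gA_i\geq0$, and likewise $\dot I_i\geq0$ on $\{I_i=0\}$ and $\dot A_i=\delta I_i\geq0$ on $\{A_i=0\}$; hence $X(t)\in\Rset_+^4$ for as long as it exists.

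The structural heart of the argument is to pass to the aggregates $N_1:=I+I_i$ and $N_2:=A+A_i$. The decisive simplification is that the two fractional source terms add up to $\pi$, so that $\dot N_1=\pi-\alpha I-\alpha_iI_i$ and $\dot N_2=\delta N_1-\mu A-\mu_iA_i$ are free of singularities. The estimates $\alpha_{\min}N_1\le\alpha I+\alpha_iI_i\le\alpha_{\max}N_1$ and $\mu N_2\le\mu A+\mu_iA_i\le\mu_iN_2$ turn these into the scalar differential inequalities $\pi-\alpha_{\max}N_1\le\dot N_1\le\pi-\alpha_{\min}N_1$ and $\delta N_1-\mu_iN_2\le\dot N_2\le\delta N_1-\mu N_2$. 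A scalar comparison argument then gives $N_1(t)\ge\frac{\pi}{\alpha_{\max}}(1-e^{-\alpha_{\max}t})>0$ for $t>0$, together with all the bounds \eqref{eq3a}--\eqref{eq3b}. Positivity of $N_1$ is exactly what keeps the solution inside $\mathcal D$: were $N_2$ to reach $0$ at a first instant $t^*>0$, then on $[t^*/2,t^*]$ one has $N_1\ge m:=\frac{\pi}{\alpha_{\max}}(1-e^{-\alpha_{\max}t^*/2})>0$, whence $\dot N_2\ge\delta m-\mu_iN_2$ and a comparison against the level $\delta m/\mu_i>0$ forbids $N_2(t^*)=0$. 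Boundedness of $N_1,N_2$ excludes blow-up, so the solution remains in a compact subset of $\mathcal D$ and $T=+\infty$.

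The individual lower bounds \eqref{eq03} are, I expect, the main obstacle, since the crude estimate $A_i\le N_2$ alone is far too lossy to reach the stated constants; they require a two-dimensional cooperative comparison. Substituting $A_i=N_2-A$ gives $\dot A=\delta I-(\mu+g)A+gN_2$ and $\dot I=\pi A/N_2-\alpha I$. Fix $\epsilon>0$; for $t$ large the bounds $N_2\ge\underline N_2-\epsilon$ and $N_2\le\overline N_2+\epsilon$, with $\underline N_2=\frac{\delta\pi}{\mu_i\alpha_{\max}}$ and $\overline N_2=\frac{\delta\pi}{\mu\alpha_{\min}}$, yield $\dot A\ge\delta I-(\mu+g)A+g(\underline N_2-\epsilon)$ and $\dot I\ge\frac{\pi}{\overline N_2+\epsilon}A-\alpha I$. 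The right-hand side is quasimonotone nondecreasing (the off-diagonal coefficients $\delta$ and $\pi/(\overline N_2+\epsilon)$ are nonnegative), so Kamke's comparison theorem applies and $(A,I)$ dominates the solution of the associated affine cooperative system. That system has a Metzler, Hurwitz matrix (trace $<0$, determinant $(\alpha-\alpha_{\min})\mu+\alpha g>0$), hence a globally attracting positive equilibrium; computing that equilibrium explicitly and letting $\epsilon\to0$ reproduces exactly the two constants in \eqref{eq03}.

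Finally, for \eqref{eq03a} I would freeze the trajectory and read the infested pair as a linear non-autonomous system $\dot I_i=-\alpha_iI_i+\frac{\pi}{A+A_i}A_i$, $\dot A_i=\delta I_i-(\mu_i+g)A_i$ whose coefficient matrix is Metzler and irreducible, the off-diagonal entries $\pi/(A+A_i)$ and $\delta$ being strictly positive along the solution. Integrating factors give $\frac{d}{dt}\bigl(e^{\alpha_it}I_i\bigr)=e^{\alpha_it}\frac{\pi A_i}{A+A_i}\ge0$ and $\frac{d}{dt}\bigl(e^{(\mu_i+g)t}A_i\bigr)=e^{(\mu_i+g)t}\delta I_i\ge0$, so a component positive at one instant stays positive; coupling these (a positive $A_i$ makes the $I_i$-integral strictly positive, and symmetrically) shows that any nonzero nonnegative initial datum propagates to $I_i(t),A_i(t)>0$ for all $t>0$, which is the assertion.
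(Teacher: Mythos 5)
Your proposal is correct and follows essentially the same route as the paper: sum the infested/uninfested compartments to get singularity-free scalar differential inequalities for $I+I_i$ and $A+A_i$ (yielding \eqref{eq3}, invariance of $\cal D$, and global existence), then compare $(I,A)$ against exactly the same affine cooperative system with Metzler--Hurwitz matrix $\begin{pmatrix}-\alpha & \mu\alpha_{\min}/\delta\\ \delta & -(\mu+g)\end{pmatrix}$ via Kamke's theorem to get \eqref{eq03}. The only deviations are cosmetic improvements: your $\epsilon$-regularization of the asymptotic bounds is more careful than the paper's direct insertion of $\limsup$/$\liminf$ into the differential inequalities, and for \eqref{eq03a} you use elementary integrating factors where the paper invokes Kamke's theorem a second time.
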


Define ${\cal D}'$ as the largest set included in $\cal D$ and fulfilling the 
inequalities of Theorem \ref{le1}, that is:
\begin{multline}
\label{eq19}
{\cal D}' \doteq \left\{
(I, A, I_i, A_i)\in\Rset_+^4\ :\ \frac{\pi 
g\mu\alpha_{\min}}{\mu_i\alpha_{\max}} \leq I,\
\frac{\delta\pi g\alpha}{\mu_i\alpha_{\max}} \leq A,\
\frac{\pi}{\alpha_{\max}} \leq I+I_i \leq \frac{\pi}{\alpha_{\min}},
\right. \\
\left.
\frac{\delta\pi}{\mu_i\alpha_{\max}} \leq A+A_i \leq 
\frac{\delta\pi}{\mu\alpha_{\min}}
 \right\}\ .
\end{multline}
Theorem \ref{le1} shows that the compact set ${\cal D}'$ is positively 
invariant 
and attracts all the trajectories.
Therefore, in order to study the asymptotics of system \eqref{eq0}, it is 
sufficient to consider the trajectories of \eqref{eq0} that are in ${\cal D}'$.

\subsection{Equilibria}

 \begin{theo}[Equilibria and asymptotic behavior]
\label{le2}
Define
\begin{gather}
\label{eq2}
\beta
\doteq \frac{\mu}{\alpha_i}-\frac{\mu_i+g}{\alpha} 
\end{gather}
$\bullet$
If $\beta \leq 0$, then there exists a unique equilibrium point of system 
\eqref{eq1} in $\cal D'$, that corresponds to a mite-free situation.
It is globally asymptotically stable, and  given by
\begin{equation}
\label{eq12}
X_{MF} = \frac{\pi}{\alpha}\begin{pmatrix}
1 \\ \frac{\delta}{\mu} \\ 0 \\ 0
\end{pmatrix}\ .
\end{equation}
$\bullet$
If $\beta> \frac{1}{\alpha_i}$, then there exists two 
equilibrium points in $\cal D'$, namely $X_{MF}$ and a coexistence equilibrium 
defined by
\begin{equation}
\label{eq7}
X_{CO} = \frac{\delta\pi g}{\alpha_i(\mu_i+g)}
\frac{\alpha\mu-\alpha_i(\mu_i+g)}{\alpha(\mu+g)-\alpha_i(\mu_i+g)}
\begin{pmatrix}
\frac{1}{\delta}\frac{\alpha_i(\mu_i+g)}{\alpha\mu-\alpha_i(\mu_i+g)} \\
\frac{\alpha}{\alpha\mu-\alpha_i(\mu_i+g)}\\
\frac{\mu_i+g}{\delta g}\\
\frac{1}{g}
\end{pmatrix}\ .
\end{equation}
Moreover, for all initial conditions in $\cal D'$ except in a zero measure set, 
the trajectories tend towards $X_{CO}$.
\end{theo}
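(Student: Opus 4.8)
The overall strategy is to locate the equilibria by hand, read their local stability off the Jacobian (recycling the next-generation computation behind \eqref{r0}), then promote these to global statements using the compact, positively invariant, globally attracting set $\cal D'$ from Theorem~\ref{le1}. Setting $f(X)=0$, I would first add \eqref{eq0a} and \eqref{eq0c}: the two nonlinear fractions sum to $1$, leaving the linear identity $\pi=\alpha I+\alpha_iI_i$. Equations \eqref{eq0b} and \eqref{eq0d} give $\mu A=\delta I+gA_i$ and $(\mu_i+g)A_i=\delta I_i$, while dividing \eqref{eq0a} by \eqref{eq0c} yields $\alpha_i I_i\,A=\alpha I\,A_i$. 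Eliminating $I,I_i,A$ in favour of $A_i$ collapses the system to one scalar equation whose roots are $A_i=0$ and a value proportional to $N\doteq\alpha\mu-\alpha_i(\mu_i+g)$. The root $A_i=0$ gives $X_{MF}$, which exists for all parameters; a root with all four coordinates positive exists iff $N>0$, i.e. iff $\beta>0$, and back-substitution reproduces \eqref{eq7} with $D\doteq\alpha(\mu+g)-\alpha_i(\mu_i+g)=N+g\alpha>N$. Finally one must check membership in $\cal D'$: inserting the coordinates of $X_{CO}$ into the inequalities \eqref{eq19} and simplifying, I expect the binding inequality to reduce exactly to the stronger hypothesis $\beta>\tfrac1{\alpha_i}$ (equivalently $N>\alpha$), which is why coexistence is quoted under that assumption and not under mere positivity $\beta>0$.

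\textbf{Local stability.} The Jacobian at $X_{MF}$ is block upper-triangular in the ordering $(I,A,I_i,A_i)$, since at $A_i=0$ the non-infested compartments do not depend on $(I_i,A_i)$. The non-infested block has eigenvalues $-\alpha,-\mu$, and the infested block coincides, up to sign convention, with $F-V$ of the next-generation construction: its trace is $-\big(\alpha_i+\mu_i+g\big)<0$ and its determinant is $-N$. Hence $X_{MF}$ is locally asymptotically stable when $\beta<0$ (i.e. $N<0$, equivalently ${\cal R}_0<1$) and, when $\beta>0$, a saddle whose unstable manifold is one-dimensional and whose stable manifold has codimension one. For $X_{CO}$ I would form the full $4\times4$ Jacobian and apply the Routh--Hurwitz criterion; I anticipate its conditions hold under $\beta>\tfrac1{\alpha_i}$, giving local asymptotic stability of the coexistence point.

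\textbf{Global convergence for $\beta\le0$.} Here I would prove that $X_{MF}$ attracts all of $\cal D'$ via the Lyapunov functional $V\doteq\delta I_i+\alpha_iA_i$, for which a direct computation gives $\dot V=A_i\big(\tfrac{\delta\pi}{A+A_i}-\alpha_i(\mu_i+g)\big)$. Since $\beta\le0$ makes the threshold $\tfrac{\delta\pi}{\alpha_i(\mu_i+g)}$ no larger than $A^\ast=\tfrac{\delta\pi}{\mu\alpha}$, combining this with the bounds on $A+A_i$ from Theorem~\ref{le1} should force $(I_i,A_i)\to(0,0)$, after which the reduced $(I,A)$-system is asymptotically autonomous with limit the linear mite-free system, whose equilibrium is $X_{MF}$. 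The delicate point is that $\dot V\le0$ is not pointwise guaranteed on all of $\cal D'$, because the crude lower bound on $A+A_i$ can dip below $A^\ast$; making this rigorous needs a LaSalle/fluctuation bootstrap that upgrades $\liminf(A+A_i)$ towards $A^\ast$.

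\textbf{Almost-global convergence for $\beta>\tfrac1{\alpha_i}$, and the main obstacle.} One must show that \emph{every} trajectory off the measure-zero stable manifold of the saddle $X_{MF}$ converges to $X_{CO}$. By \eqref{eq03a} no trajectory starting with mites present ever reaches the mite-free face, and $\cal D'$ is a compact attractor, so it suffices to exclude periodic orbits and other nontrivial $\omega$-limit sets. This is the genuinely hard step. Monotone-systems theory is unavailable: the off-diagonal signs of the Jacobian cannot be made sign-symmetric by any orthant, since the couplings $I\!\leftrightarrow\!A$ and $A\!\leftrightarrow\!A_i$ force $\epsilon_I=\epsilon_{A_i}$ while the coupling $I\!\leftrightarrow\!A_i$ forces $\epsilon_I\neq\epsilon_{A_i}$. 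I would therefore instead invoke a higher-dimensional Bendixson criterion through the second additive compound matrix (Li--Muldowney), or seek a Dulac-type function, to rule out closed orbits; combined with dissipativity and the instability of $X_{MF}$ this would deliver convergence to $X_{CO}$ for all initial data outside the codimension-one stable manifold, which has zero Lebesgue measure.
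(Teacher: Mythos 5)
Your equilibrium computation and the local analysis at $X_{MF}$ are essentially sound (the block-triangular structure arises because the infested equations decouple from $(I,A)$ at $A_i=0$ — not, as you write, the other way around, since $\dot I$ and $\dot A$ do depend on $A_i$ there — but the eigenvalues $-\alpha,-\mu$ and the infested block with determinant $-N$ come out the same). The two global statements, however, which are the real content of the theorem, are left open. For $\beta\le 0$ you concede yourself that $\dot V\le 0$ fails pointwise on ${\cal D}'$: the bound $\liminf_{t\to+\infty}(A+A_i)\ge \delta\pi/(\mu_i\alpha_{\max})$ from Theorem \ref{le1} does not dominate the threshold $\delta\pi/(\alpha_i(\mu_i+g))$ unless $\mu_i\alpha_{\max}\le\alpha_i(\mu_i+g)$, which the hypotheses do not guarantee, so the ``LaSalle/fluctuation bootstrap'' is precisely the missing proof, not a routine patch. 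For $\beta>1/\alpha_i$ you only name candidates (Li--Muldowney, Dulac) without executing them; in dimension four the second additive compound is a $6\times 6$ matrix, nothing in the sketch indicates the required logarithmic-norm estimate holds, and the local stability of $X_{CO}$ itself is deferred to an unverified Routh--Hurwitz ``anticipation''. So no proof of either convergence claim is actually present.

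More importantly, your assertion that ``monotone-systems theory is unavailable'' misses the paper's central idea. You are right that the closed-loop vector field is not monotone with respect to any orthant — the $gA_i$ term in $\dot A$ is the obstruction, exactly as your sign argument shows. The paper's remedy is to cut that very term: treat $u=gA_i$ as an input and $y=gA_i$ as an output, so that the open-loop system \eqref{eq00a}--\eqref{eq00e} \emph{is} monotone with respect to the cone $\Rset_+\times\Rset_+\times\Rset_-\times\Rset_-$, with anti-monotone output. The Angeli--Sontag theory of monotone I/O systems with steady-state characteristics then applies: for each constant $u$ the open loop has a unique attracting equilibrium on the relevant branch (via Hirsch's strong-monotonicity theorem, which also replaces your Routh--Hurwitz step and handles the case $\beta\le 0$ without any Lyapunov function), the I/O characteristic $k(u)$ is a decreasing affine map of slope $1/(\alpha_i\beta)$, and closing the loop $u=y$ gives almost-global convergence precisely when that slope is less than $1$. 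This is where the hypothesis $\beta>1/\alpha_i$ enters: it is a small-gain condition on the characteristic, not, as you conjecture, the condition for $X_{CO}\in{\cal D}'$ — the coexistence equilibrium exists for every $\beta>0$, which is exactly why the paper conjectures (but does not prove) the same convergence for $\beta\in(0,1/\alpha_i]$.
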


Recall that ${\cal R}_0 = \frac{\alpha\mu}{\alpha_i(\mu_i+g)}$, in such a way 
that
\begin{equation}
\beta > 0 \Leftrightarrow {\cal R}_0>1\ .
\end{equation}

The point ${\cal R}_0 = 1$, that is $\beta=0$, is the point of a transcritical 
bifurcation, that appears when ${\cal R}_0$ gets larger than 1.
For larger values, two equilibria are found analytically, a mite-free one, that 
is unstable, and a coexistence equilibrium which is stable.
We've shown (Theorem \ref{le2}) that the latter is globally asymptotically 
stable if $\beta > \frac{1}{\alpha_i}$, and conjecture that the same property 
holds for $\beta$ in the interval $(0, \frac{1}{\alpha_i}]$.
Using $\alpha$ as bifurcation parameter, the bifurcation appears  for 
$\alpha=\frac{\alpha_i(\mu_i +g)}{\mu}\approx 0.125$, 
after substituting the parameter values. 

If we solve numerically the system from \eqref{eq0}, we confirm the existence 
of 
two equilibria when $\alpha$ crosses the bifurcation value of $0.125$. The 
instability and stability of the mite-free and coexistence equilibria, 
respectively is shown in the simulation of figure \ref{fig_infest}.

\begin{figure}
 \centering
 \includegraphics[width=\textwidth]{./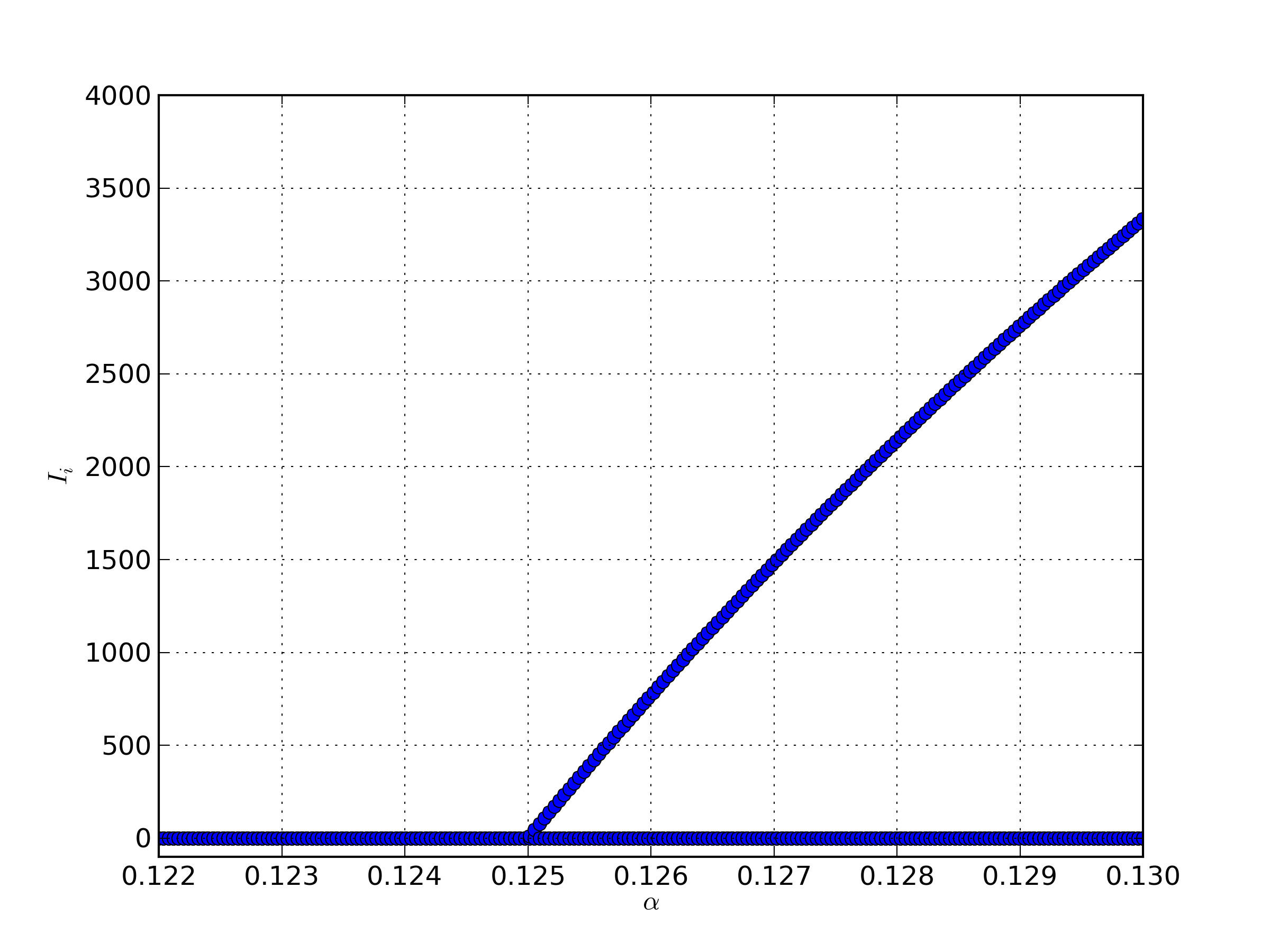}
\label{fig:bifurcation}
\caption{Bifurcation diagram showing the transcritical bifurcation with 
bifurcation point corresponding to $\alpha\approx 0.125$ ($\beta=0$, ${\cal 
R}_0=1$). Blue dots correspond to the equilibrium values of $I_i$}
\end{figure}

 \begin{figure}
  \begin{center}
  \includegraphics[width=\textwidth]{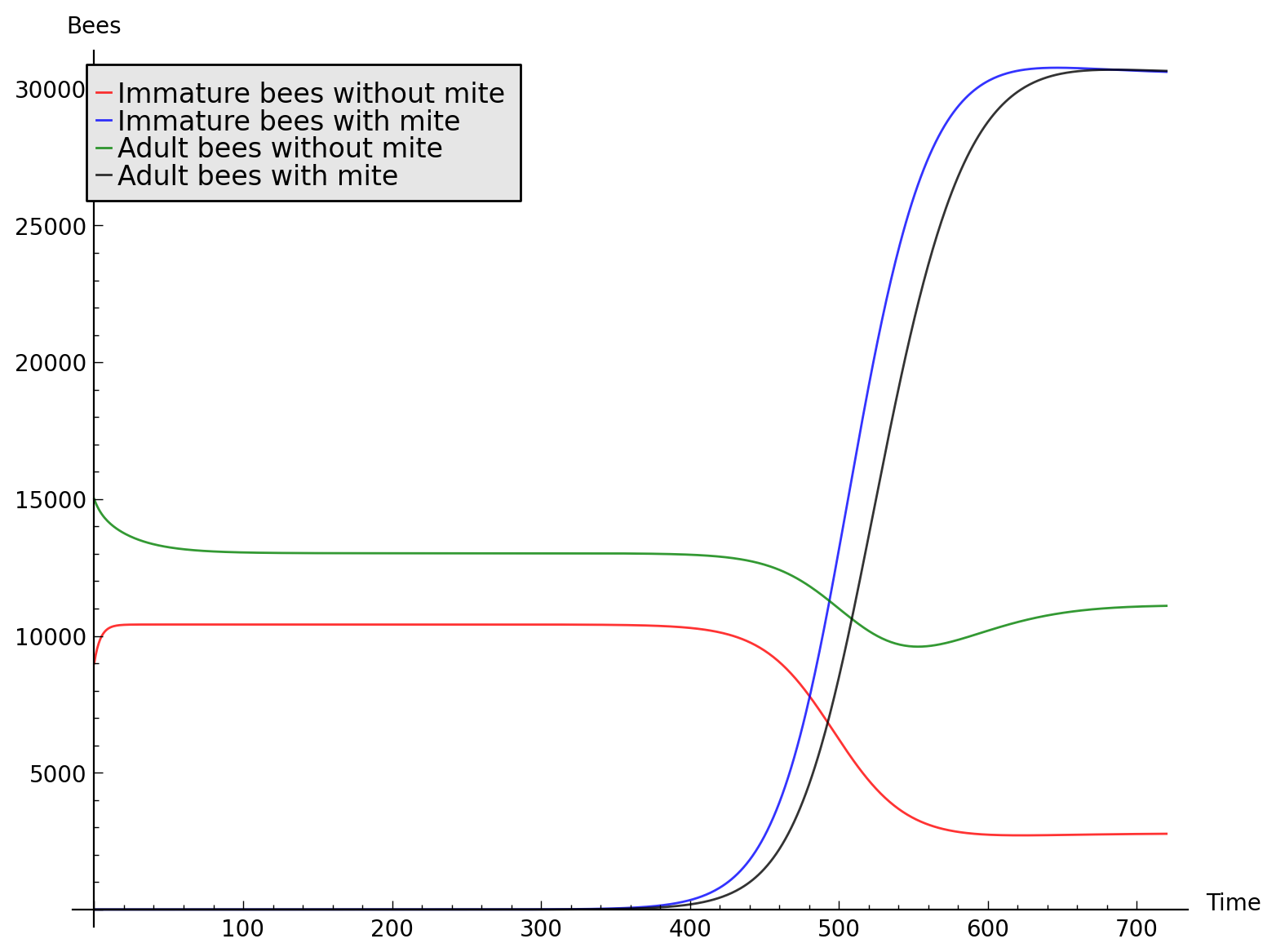} 
  \caption{Simulation showing the infestation of a colony, by a single 
infested adult bee, with parameters giving ${\cal R}_0\approx 1.33$. Initial 
conditions: $I = 5000$, $I_{i}= 0$, $A = 20000$,  $A_{i} = 
0$ and
   parameters $g = 0.01$, $H_i = 0.1$, $\mu=0.04$, $\delta=0.05$, 
$\gamma=10^{-7}$ and $H = 0.19$. On time $t=100$ days, a single infested adult 
bee 
is introduced into the colony. For this simulation, $\beta=0.375$ and ${\cal 
R}_0\approx3.199$}
  \label{fig_infest}
  \end{center}  
   
  \end{figure}
 
 \begin{figure}
  \begin{center}
  \includegraphics[scale=0.5]{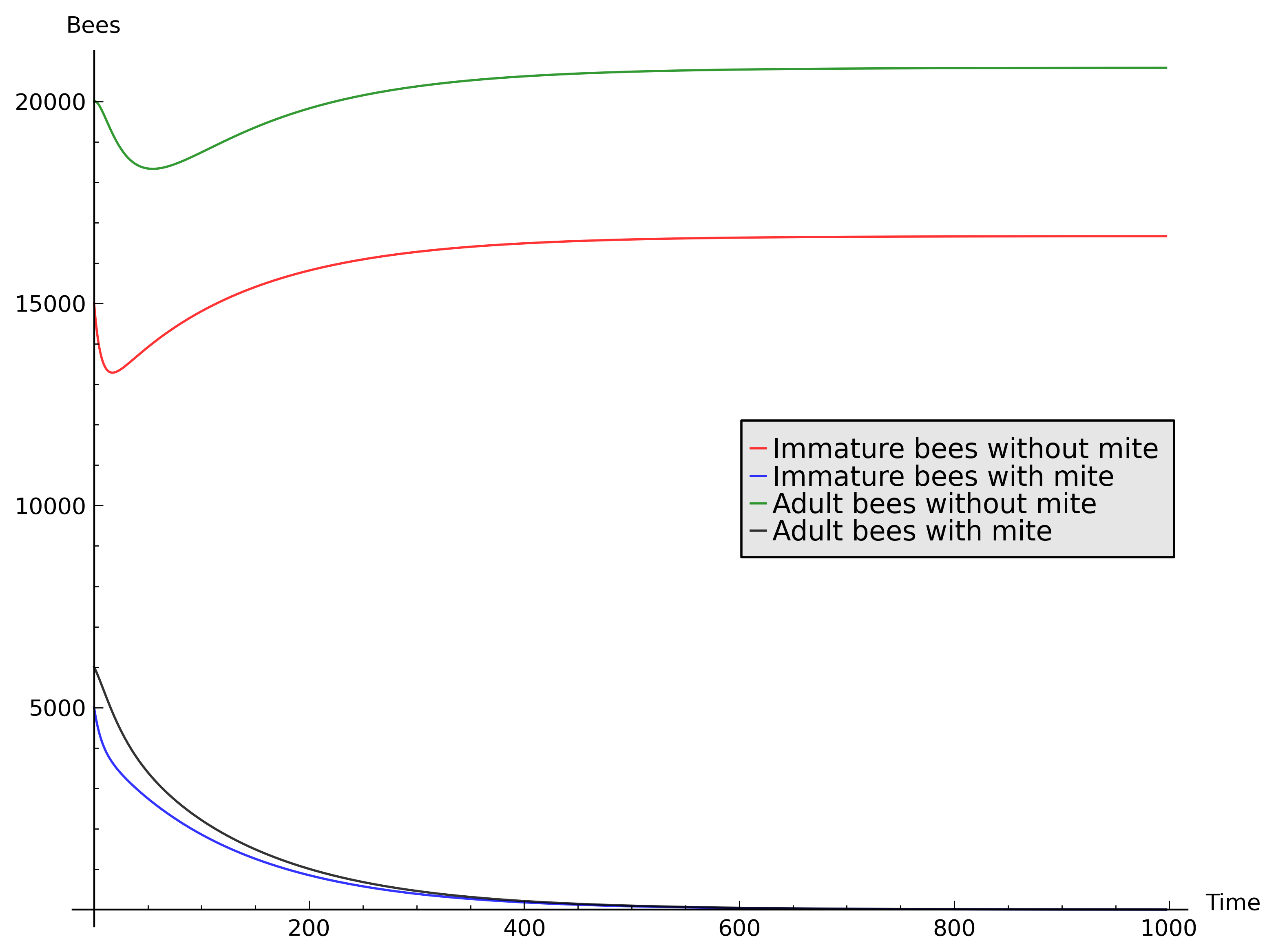} 
  \caption{Simulation showing the elimination of the mites from a colony, by a 
single 
infested adult bee, when $R_0<1$. Initial 
conditions: $I = 15000$, $I_{i}= 5000$, $A = 20000$,  $A_{i} = 
6000$ and
   parameters $g = 0.01$, $H_i = 0.1$, $\mu=0.05$, $\delta=0.05$, 
$\gamma=10^{-7}$ and $H = 0.1$.}
  \label{fig_mitefree}
  \end{center}  
   
  \end{figure}

Figures \ref{fig_infest} and \ref{fig_mitefree} show simulations representing 
the coexistence and mite-free equilibria, respectively.

\section{Discussion and Conclusions} 

Coexistence of bees and Varroa mites in nature is an undeniable fact. However, 
this 
coexistence is fraught with dangers for the bees, since Varroa mites can be 
vectors 
of lethal viral diseases. These deleterious effects for the health of the 
individual workers and the whole colony, has led to the evolution of resistance 
behaviors such as the hygienic behavior and grooming.

Those behaviors are not entirely without cost to the bees, exacerbated hygienic 
behavior -- when both $H$ and $H_i$ are intensified -- can exert a substantial 
toll  on the fitness of the queen. So it is safe to say that this parasitic 
relationship 
has evolved within a vary narrow range of parameters. 
Even if the mite-free equilibrium is advantageous to the colony, maintaining it 
may be too expensive to the bees. 

[we need some discussion regarding the conditions for stability of $X_{MF}$, or 
the invasibility of the colony by mites]

On the other hand, in the absence of viral diseases, mite parasitism seems to 
be 
fairly harmless. If we look at the expression for the ${\cal R}_0$ of 
infestation (\ref{r0}), we can see that the 
mite-induced bee mortality, $\gamma$, (not by viral diseases), 
must be kept low or risk destabilizing the co-existence equilibrium.

Africanized Honey bees, having evolved more effective resistance behaviors, 
are more resistant to CCD by their ability to keep infestation levels lower 
than those of their European 
counterparts\citep{moretto_effects_1991,moretto_heritability_1993}. 
Unfortunately, the lack of more detailed experiments measuring the rates of 
grooming and higienic behaviors in both groups (EHB and AHB), makes it hard to 
position them accurately in the parameter space of the model presented. 

Finally, we hope that the model presented here along with its demonstrated 
dynamical properties will serve as a solid foundation for the development of 
other models including viral dynamics and other aspects of bee colony health.

\section{Appendix -- Proofs of the theorems}

\begin{proof}[Proof of Theorem \ref{le1}]
$\bullet$
Clearly, the right-hand side of the system of equations is globally Lipschitz on 
any subset of $\cal D$ where $A+A_i$ is bounded away from zero.
The existence and uniqueness of the solution of system \eqref{eq0} is then 
obtained for each trajectory staying at finite distance of this boundary.
We will show that the two formulas provided in the statement are valid for each 
trajectory departing initially from a point where $A+A_i\neq 0$.
As a consequence, the fact that all trajectories are defined on infinite horizon 
will ensue.

$\bullet$
Summing up the first two equations in \eqref{eq0} yields, for any point inside 
$\cal D$:
\begin{equation}
\dot I + \dot I_i = \pi - \alpha I - \alpha_iI_i
\geq  \pi - \alpha_{\max}(I+I_i)\ .
\end{equation}
Integrating this differential inequality between any two points $X(0)=X_0$ and 
$X(t)$ of a trajectory for which $X(\tau)\in\cal D$, $\tau\in [0;t]$, one gets
\begin{equation}
\label{eq4}
I(t)+I_i(t) 
\geq \frac{\pi}{\alpha_{\max}} \left(
1-e^{-\alpha_{\max}t}
\right)
+ (I(0)+I_i(0))e^{-\alpha_{\max} t}\ ,
\end{equation}
where the right-hand side is in any case positive for any $t>0$.

Similarly, one has
\begin{equation}
\dot I + \dot I_i \leq  \pi - \alpha_{\min}(I+I_i)\ ,
\end{equation}
and therefore
\begin{equation}
\label{eq5}
I(t)+I_i(t) 
\leq \frac{\pi}{\alpha_{\min}} \left(
1-e^{-\alpha_{\min}t}
\right)
+ (I(0)+I_i(0))e^{-\alpha_{\min} t}\ .
\end{equation}
This proves in particular that the inequalities in \eqref{eq3a} hold for any 
portion of trajectory remaining inside $\cal D$.

We now consider the evolution of $A, A_i$.
Similarly to what was done for $I,I_i$, one has
\begin{equation}
\dot A+\dot A_i = \delta(I+I_i)-\mu A-\mu_iA_i
\geq \delta(I+I_i)-\mu_i(A+A_i)
\end{equation}
Therefore,
\begin{equation}
A(t)+A_i(t)
\geq (A(0)+A_i(0))e^{-\mu_it}
+ \delta \int_0^t (I(\tau)+I_i(\tau)) e^{-\mu_i(t-\tau)}.\ d\tau\ .
\end{equation}
Integrating the lower bound of $I+I_i$ extracted from \eqref{eq4} yields the 
conclusion that any solution departing from $\cal D$ indeed remains in $\cal D$ 
as long as it is defined.
On the other hand, we saw previously that trajectories remaining in $\cal D$ 
could be extended on the whole semi-axis $[0,+\infty)$.
Therefore, any trajectory departing from a point in $\cal D$ can be extended to 
$[0,+\infty)$, and remains in $\cal D$ for any $t>0$.
In particular, \eqref{eq3a} holds for any trajectory departing inside $\cal D$.

Let us now achieve the proof by bounding $A+A_i$ from above.
One has
\begin{equation}
\dot A+\dot A_i \leq \delta(I+I_i)-\mu(A+A_i)
\end{equation}
and thus
\begin{equation}
A(t)+A_i(t)
\leq (A(0)+A_i(0))e^{-\mu t}
+ \delta \int_0^t (I(\tau)+I_i(\tau)) e^{-\mu(t-\tau)}.\ d\tau\ .
\end{equation}
Using \eqref{eq5} then permits to achieve the proof of \eqref{eq3b}, and finally 
the proof of  \eqref{eq3}.

$\bullet$
Let us now prove \eqref{eq03}.
One deduces from \eqref{eq0a} and \eqref{eq0b} and the bounds established 
earlier the differential inequalities
\begin{subequations}
\label{eq25}
\begin{gather}
\dot I \geq \frac{\pi}{\limsup (A+A_i)} A - \alpha I
\geq \frac{\mu\alpha_{\min}}{\delta} A - \alpha I,\\
\dot A \geq \delta I - \mu A +g(\liminf (A+A_i) -A)
\geq \delta I - (\mu+g) A +\frac{\delta\pi g}{\mu_i\alpha_{\max}}
\end{gather}
\end{subequations}
The auxiliary linear time-invariant system
\begin{equation}
\label{eq18}
\frac{d}{dt}
\begin{pmatrix}
I'\\A'
\end{pmatrix}
= \begin{pmatrix}
-\alpha & \frac{\mu\alpha_{\min}}{\delta}\\
\delta & -(\mu+g)
\end{pmatrix}\begin{pmatrix}
I'\\A'
\end{pmatrix}
+ \begin{pmatrix}
0\\\frac{\delta\pi g}{\mu_i\alpha_{\max}}
\end{pmatrix}
\end{equation}
is monotone, as the state matrix involved is a Metzler matrix \citep{?}.
Moreover, it is asymptotically stable, as the associated characteristic 
polynomial is equal to
\begin{equation}
\begin{vmatrix}
s+\alpha & -\frac{\mu\alpha_{\min}}{\delta}\\
-\delta & s+\mu+g
\end{vmatrix}
= s^2+(\alpha+\mu+g)s + \alpha(\mu+g) - \mu\alpha_{\min}\ ,
\end{equation}
and thus Hurwitz because $\alpha(\mu+g) - \mu\alpha_{\min} =  
(\alpha-\alpha_{\min})\mu+\alpha g>0$.
Therefore, all trajectories of \eqref{eq18} tend towards the unique equilibrium:
\begin{eqnarray}
\lim_{t\to +\infty} \begin{pmatrix}
I'(t)\\A'(t)
\end{pmatrix}
& = &
\nonumber
- \begin{pmatrix}
-\alpha & \frac{\mu\alpha_{\min}}{\delta}\\
\delta & -(\mu+g)
\end{pmatrix}^{-1}\begin{pmatrix}
0\\\frac{\delta\pi g}{\mu_i\alpha_{\max}}
\end{pmatrix}\\
& = &
\nonumber
\frac{1}{(\alpha-\alpha_{\min})\mu+\alpha g}
\begin{pmatrix}
\mu+g & \frac{\mu\alpha_{\min}}{\delta}\\
\delta & \alpha
\end{pmatrix}\begin{pmatrix}
0\\\frac{\delta\pi g}{\mu_i\alpha_{\max}}
\end{pmatrix}\\
& = &
\frac{1}{(\alpha-\alpha_{\min})\mu+\alpha g}
\begin{pmatrix}
\frac{\pi g\mu\alpha_{\min}}{\mu_i\alpha_{\max}} \\ \frac{\delta\pi 
g\alpha}{\mu_i\alpha_{\max}}
\end{pmatrix}\ .
\end{eqnarray}
Invoking Kamke's Theorem, see e.g.\ \citep[Theorem 10, p.\ 
29]{coppel1965stability}, one deduces from \eqref{eq25} and the monotony of 
\eqref{eq18} the following comparison result, that holds for all trajectories of 
\eqref{eq6}:
\begin{equation}
\liminf_{t\to +\infty} \begin{pmatrix}
I(t)\\A(t)
\end{pmatrix}
\geq
\frac{1}{(\alpha-\alpha_{\min})\mu+\alpha g}
\begin{pmatrix}
\frac{\pi g\mu\alpha_{\min}}{\mu_i\alpha_{\max}} \\ \frac{\delta\pi 
g\alpha}{\mu_i\alpha_{\max}}
\end{pmatrix}\ .
\end{equation}
This gives \eqref{eq03}.

$\bullet$
One finally proves \eqref{eq03a}.
Using \eqref{eq3b}, identity \eqref{eq0c} implies
\begin{equation}
\dot I_i \geq \frac{\pi}{\limsup (A+A_i)} A_i - \alpha_i I_i
\geq \frac{\mu\alpha_{\min}}{\delta} A_i - \alpha_i I_i 
\end{equation}
Joining this with \eqref{eq0d} and using Kamke's result as before, ones deduces 
that both $I_i$ and $A_i$ have positive values when at least one of their two 
initial values are positive.
This achieves the proof of Theorem \ref{le1}.
\end{proof}

 \begin{proof}[Proof of Theorem \ref{le2}]
The proof is organized as follows.

\begin{enumerate}
\item
We first write system \eqref{eq0} under the form of an I/O system, namely
\begin{subequations}
\label{eq00}
\begin{gather}
\label{eq00a}
\dot I = \pi \frac{A}{A+A_i} - \alpha I\\
\label{eq00b}
\dot A = \delta I -\mu A + u \\
\label{eq00c}
\dot I_i = \pi \frac{A_i}{A+A_i} - \alpha_iI_i\\
\label{eq00d}
\dot A_i = \delta I_i -(\mu_i+g)A_i\\
\label{eq00e}
y = gA_i
\end{gather}
where $u$, resp.\ $y$, is the input, resp.\ the output, closed by the unitary 
feedback
\begin{equation}
\label{eq00f}
u = y\ .
\end{equation}
\end{subequations}
For subsequent use of the theory of monotone systems, one determines, for any 
(nonnegative) constant value of $u$, the equilibrium values of $(I, A, I_i, 
A_i)$ for equation \eqref{eq00a}-\eqref{eq00d}, and the corresponding values of 
$y$ as given by \eqref{eq00e}.

\item
The equilibrium points of system \eqref{eq0} are then exactly (and easily) 
obtained by solving the fixed point problem $u=y$ among the solutions of the 
previous problem.

unique equilibrium points when $\beta \leq 0$, and there exist exactly two 
equilibrium points when $\beta >0$.
points.

\item
One then shows that the I/O system $u\mapsto y$ defined by 
\eqref{eq00a}-\eqref{eq00e} is anti-monotone with respect to certain order 
relation, and the study of the stability of these equilibria shows that it 
admits single-valued I/S and I/O characteristics, as in 
\citep{angeli2004interconnections}.

\item
Using this properties, the stability of the equilibria of the system obtained by 
closing the loop  \eqref{eq00a}-\eqref{eq00e} by \eqref{eq00f} is then 
established using arguments similar to \citet{angeli2003monotone}.

\end{enumerate}

\noindent
1.\
For fixed $u>0$, the equilibrium equations of the I/O system \eqref{eq00} are 
given by
\begin{subequations}
\label{eq6}
\begin{gather}
\label{eq6a}
\pi \frac{A}{A+A_i} - \alpha I = 0\\
\label{eq6c}
\delta I -\mu A + u  = 0 \\
\label{eq6b}
\pi \frac{A_i}{A+A_i} - \alpha_iI_i  = 0\\
\label{eq6d}
\delta I_i -(\mu_i+g)A_i  = 0\\
y = gA_i
\end{gather}
\end{subequations}
Summing up the first and third identities gives
\begin{equation}
\pi = \alpha I + \alpha_iI_i\ ,
\end{equation}
and thus necessarily:
\begin{equation}
\label{eq8}
\exists \lambda\in [0;1],\qquad
I = \lambda\frac{\pi}{\alpha},\qquad
I_i = (1-\lambda)\frac{\pi}{\alpha_i}\ .
\end{equation}

$\bullet$ The case $\lambda =0$ yields $I=0$, and then $A=0$ by \eqref{eq6a}, 
and therefore $u$ has to be zero from \eqref{eq6c}.
Also, $I_i=\frac{\pi}{\alpha_i}$, $A_i=\frac{\delta\pi}{\alpha_i(\mu_i+g)}$ by 
\eqref{eq6d}, and then $y=gA_i= \frac{g\delta\pi}{\alpha_i(\mu_i+g)}$.
in \eqref{eq19} and should be discarded.
obtained point is located outside $\cal D$ and has to be discarded; or 

$\bullet$ The case $\lambda =1$ yields $I_i=0$, and then $A_i=0$ by \eqref{eq6d} 
or \eqref{eq6b}, and $y=0$.
There remains the two following conditions:
\begin{equation}
\label{eq11}
\pi = \alpha I,\qquad
\delta I = \mu A - u
\end{equation}
which yield
\begin{equation}
\label{eq11a}
I = \frac{\pi}{\alpha},\qquad
A = \frac{\delta\pi}{\alpha\mu} + \frac{u}{\mu}
\end{equation}
unconditionally.

$\bullet$ Let us now look for possible values of $\lambda$ in $(0;1)$.
From \eqref{eq8} and \eqref{eq6a}-\eqref{eq6b}, one deduces
\begin{equation}
\label{eq9}
\frac{A}{A_i}
= \frac{\alpha I}{\alpha_iI_i}
= \frac{\lambda}{1-\lambda}\ .
\end{equation}
Using \eqref{eq8} on the one hand and summing the two identities 
\eqref{eq6c}-\eqref{eq6d} on the other hand, yields
\begin{equation}
\delta (I+I_i)
= \delta \pi \left(
\frac{\lambda}{\alpha} + \frac{1-\lambda}{\alpha_i}
\right)
= \mu A+ (\mu_i+g)A_i - u
= A \left(
\mu + (\mu_i+g)\frac{1-\lambda}{\lambda}
\right) - u\ .
\end{equation}
This permits to express $A$ as a function of $\lambda$, namely:
\begin{equation}
A = \frac{\lambda}{\lambda\mu + (1-\lambda)(\mu_i+g)} \left[
\delta\pi\left(
\frac{\lambda}{\alpha} + \frac{1-\lambda}{\alpha_i}
\right)
+ u \right]\ .
\end{equation}
Using this formula together with \eqref{eq8}, \eqref{eq6d} and \eqref{eq9} now 
allows to find an equation involving only the unknown $\lambda$, namely:
\begin{multline}
\delta I_i
= \frac{\delta\pi}{\alpha_i}(1-\lambda)
= (\mu_i+g)A_i
= (\mu_i+g)\frac{A_i}{A}A\\
= (\mu_i+g) \frac{1-\lambda}{\lambda} \frac{\lambda}{\lambda\mu + 
(1-\lambda)(\mu_i+g)} \left[
\delta\pi\left(
\frac{\lambda}{\alpha} + \frac{1-\lambda}{\alpha_i}
\right)
+ u \right]\ .
\end{multline}
Simplifying (as $\lambda \neq 0, 1$) gives:
\begin{equation}
\frac{\delta\pi}{\alpha_i}
= \frac{\mu_i+g}{\lambda\mu + (1-\lambda)(\mu_i+g)} \left[
\delta\pi\left(
\frac{\lambda}{\alpha} + \frac{1-\lambda}{\alpha_i}
\right)
+ u \right]\ .
\end{equation}
The previous condition is clearly affine in $\lambda$.
It writes
\begin{equation}
\left(
\lambda\mu + (1-\lambda)(\mu_i+g)
\right)\frac{\delta\pi}{\alpha_i}
= (\mu_i+g) \left(
\delta\pi\left(
\frac{\lambda}{\alpha} + \frac{1-\lambda}{\alpha_i}
\right)
+ u \right)
\end{equation}
which, after developing and simplifying, can be expressed as:
\begin{equation}
\lambda\mu\frac{\delta\pi}{\alpha_i}
= (\mu_i+g) \left(
\delta\pi\frac{\lambda}{\alpha} + u
\right)
\end{equation}
and finally
\begin{equation}
(\mu_i+g) u
= \delta\pi\left(
\frac{\mu}{\alpha_i} - \frac{\mu_i+g}{\alpha}
\right)\lambda
= \delta\pi\beta\lambda\ .
\end{equation}
For $u\geq 0$, this equation admits a solution in $(0;1)$ if and only if
\begin{equation}
\label{eq21a}
\beta >0 \qquad\text{ and } \qquad u < u^* \doteq 
\frac{\delta\pi\beta}{\mu_i+g}\ ,
\end{equation}
and the latter is given as
\begin{equation}
\label{eq21}
\lambda = \frac{\mu_i+g}{\delta\pi\beta} u\ .
\end{equation}
The state and output values may then be expressed explicitly as functions of 
$u$.
In particular, one has
\begin{equation}
\label{eq20}
y(u) = gA_i = \frac{\delta g}{\mu_i+g} I_i
= \frac{\delta\pi g}{\alpha_i(\mu_i+g)} (1-\lambda)
= \frac{\delta\pi g}{\alpha_i(\mu_i+g)} \left(
1-\frac{\mu_i+g}{\delta\pi\beta} u
\right)\ .
\end{equation}

$\bullet$
\eqref{eq6} admits exactly one solution in ${\cal D}'$ for any $u \geq 0$;
admits a supplementary solution in ${\cal D}'$ for any $u \in [0;u^*)$.
The following tables summarize the number of solutions of \eqref{eq6} for all 
nonnegative values of $u$.
\begin{figure}[h]
\begin{center}
\begin{tabular}{|c|c|}
\hline
Values of $u\geq 0$ & Number of distinct solutions of \eqref{eq6}\\
\hline
$u=0$ & 2 \\
$0<u$ & 1\\
\hline
\end{tabular}
\caption{${\cal R}_0\leq 1$ (i.e.\ $\beta \leq 0$).}
\end{center}
\end{figure}
\begin{figure}[h]
\begin{center}
\begin{tabular}{|c|c|}
\hline
Values of $u\geq 0$ & Number of distinct solutions of \eqref{eq6}\\
\hline
$u=0$ & 3 \\
$0 < u < u^*$ & 2 \\
$u^* \leq u$ & 1\\
\hline
\end{tabular}
\caption{${\cal R}_0> 1$ (i.e.\ $\beta > 0$).}
\end{center}
\end{figure}

2.\
The equilibrium points of system \eqref{eq0} are exactly those points for which 
$u=y(u)$ for some nonnegative scalar $u$, where $y(u)$ is one of the output 
values corresponding to $u$ previously computed.
We now examine in more details the solutions of this equation.

$\bullet$
For the value $\lambda =0$ in the previous computations, one should have $u=0$, 
due to \eqref{eq21}; but on the other hand $y >0$ for $u=0$, due to 
\eqref{eq20}.
Therefore this point does not correspond to an equilibrium point of system 
\eqref{eq6}.

$\bullet$
The value $\lambda = 1$ yields a unique equilibrium point.
Indeed, $y=0$, so $u$ should be zero too, and the unique solution is given by
\begin{equation}
I = \frac{\pi}{\alpha},\ A = \frac{\delta\pi}{\alpha\mu},\
I_i = 0,\ A_i = 0,\ y=0\ .
\end{equation}
This corresponds to the equilibrium denoted $X_{MF}$ in the statement.

$\bullet$
Let us consider now the case of $\lambda\in (0;1)$.
For this case to be considered, it is necessary that $\beta>0$, that is ${\cal 
R}_0>1$.
The value of $u$ should be such that (see \eqref{eq20})
\begin{equation}
y =  \frac{\delta\pi g}{\alpha_i(\mu_i+g)} -\frac{g}{\alpha_i\beta} u = u\ ,
\end{equation}
that is
\begin{equation}
\left(
1 + \frac{g}{\alpha_i\beta}
\right) u
=  \frac{\delta\pi g}{\alpha_i(\mu_i+g)}\ ,
\end{equation}
or again
\begin{equation}
\label{eq23}
u = \frac{\delta\pi \beta g}{(\alpha_i\beta+g)(\mu_i+g)}
= \frac{\delta\pi g}{\alpha_i(\mu_i+g)}
\frac{\alpha\mu-\alpha_i(\mu_i+g)}{\alpha(\mu+g)-\alpha_i(\mu_i+g)} \ ,
\end{equation}
after replacing $\beta$ by its value defined in \eqref{eq2}.
The corresponding value of
\begin{equation}
\lambda = \frac{\mu_i+g}{\delta\pi\beta} u
= \frac{g}{\alpha_i\beta+g}\ ,
\end{equation}
given by \eqref{eq21}, is clearly contained in $(0;1)$ when $\beta>0$.
Therefore, when $\beta>0$, there also exists a second equilibrium.
The latter is given by:

\begin{subequations}
\label{eq26}
\begin{gather}
\label{eq26a}
I = \lambda\frac{\pi}{\alpha}
= \frac{\mu_i+g}{\alpha\delta\beta} u
= \frac{1}{\delta}\frac{\alpha_i(\mu_i+g)}{\alpha\mu-\alpha_i(\mu_i+g)} u,\qquad
A_i = \frac{u}{g},\\
\label{eq26b}
I_i = \frac{\mu_i+g}{\delta}A_i
= \frac{\mu_i+g}{\delta g}u\\
\label{eq26ba}
A = \frac{1}{\mu} \left(
\delta I+u
\right)
= \frac{1}{\mu} \left(
\frac{\alpha_i(\mu_i+g)}{\alpha\mu-\alpha_i(\mu_i+g)} + 1
\right)u
= \frac{\alpha}{\alpha\mu-\alpha_i(\mu_i+g)}u\ ,
\end{gather}
\end{subequations}
and corresponds to $X_{CO}$ defined in the statement.


diagonal  that comes from the loop closing.

3.\
Let $\cal K$ be the cone in $\Rset_+^4$ defined as the product of orthants 
$\Rset_+\times\Rset_+\times\Rset_-\times\Rset_-$.
We endow the state space with this order.
In other words, for any $X=(I, A, I_i, A_i)$ and $X'=(I', A', I'_i, A'_i)$ in 
$\Rset_+^4$, $X \leq_{\cal K} X'$ means:
\begin{equation}
I\leq I', A\leq A', I_i\geq I'_i, A_i\geq A'_i\ .
\end{equation}
With this structure, one may verify that the system \eqref{eq00a}-\eqref{eq00e} 
has the following monotonicity properties 
\citep{hirsch1988stability,smith2008monotone}
\begin{itemize}
\item
For any  function $u\in{\cal U}\doteq \{ u: [0;+\infty)\to\Rset,$ locally 
integrable and taking on positive values almost everywhere$\}$, for any 
$X_0,X'_0 \in \Rset_+^4$,
\begin{equation}
X_0 \leq_{\cal K} X'_0 \quad \Rightarrow \quad
\forall t\geq 0,\
X(t; X_0, u) \leq _{\cal K} X(t; X'_0, u)
\end{equation}
where by definition $X(t; X_0, u)$ denotes the value at time $t$ of the point in 
the trajectory departing at time $0$ from $X_0$ and subject to input $u$.

\item
The Jacobian matrix of the I/O system is
\begin{equation}
\label{eq15a}
\begin{pmatrix}
-\alpha & \pi\frac{A_i}{(A+A_i)^2} & 0 & -\pi\frac{A}{(A+A_i)^2}\\
\delta & -\mu & 0 & 0\\
0 & -\pi\frac{A_i}{(A+A_i)^2} & -\alpha_i & \pi\frac{A}{(A+A_i)^2}\\
0 & 0 & \delta & -(\mu_i+g)
\end{pmatrix}\ ,
\end{equation}
which is irreducible when $A\neq 0$ and $A_i\neq 0$.
The system is therefore strongly monotone in ${\cal D}'\setminus\{X\ :\ A_i=0\}$ 
(notice that ${\cal D}'$ does not contain points for which $A=0$), and also on 
the invariant subset ${\cal D}'\cap\{X\ :\ I_i = 0,\ A_i=0,\}$.

\item
The input-to-state map is monotone, that is: for any inputs $u,u'\in\cal U$, for 
any $X_0\in\Rset_+^4$,
\begin{equation}
u(t) \leq u'(t) \text{ a.e.} \quad \Rightarrow \quad
\forall t\geq 0,\
X(t; X_0, u) \leq _{\cal K} X(t; X'_0, u)\ .
\end{equation}

\item
The state-to-output map is anti-monotone, that is:
for any $X,X' \in \Rset_+^4$,
\begin{equation}
X \leq_{\cal K} X' \quad \Rightarrow \quad
\forall t\geq 0,\
gA_i \geq gA'_i\ 
\end{equation}
\end{itemize}

monotone (due to the irreducibility of the Jacobian matrix) for any constant 
value of $u$.

\comment{
Moreover, one shows that
\begin{equation}
\label{eq27}
X_{CO} \ll_{\cal K} X_{MF}\ .
\end{equation}
Comparing the expressions given in the statement, it is indeed sufficient, in 
order to show this, to prove that
\begin{subequations}
\begin{gather}
\frac{\pi}{\alpha}
> \frac{\delta\pi g}{\alpha_i(\mu_i+g)}
\frac{\alpha\mu-\alpha_i(\mu_i+g)}{\alpha(\mu+g)-\alpha_i(\mu_i+g)}
\frac{1}{\delta}\frac{\alpha_i(\mu_i+g)}{\alpha\mu-\alpha_i(\mu_i+g)}\\
\frac{\pi}{\alpha} \frac{\delta}{\mu}
> \frac{\delta\pi g}{\alpha_i(\mu_i+g)}
\frac{\alpha\mu-\alpha_i(\mu_i+g)}{\alpha(\mu+g)-\alpha_i(\mu_i+g)}
\frac{\alpha}{\alpha\mu-\alpha_i(\mu_i+g)}
\end{gather}
\end{subequations}
that is:
\begin{subequations}
\begin{gather}
1 > \frac{\alpha g}{\alpha(\mu+g)-\alpha_i(\mu_i+g)}\\
\frac{\alpha_i(\mu_i+g)}{\alpha\mu}
> \frac{\alpha g}{\alpha(\mu+g)-\alpha_i(\mu_i+g)}
\end{gather}
\end{subequations}
Introducing now $\beta$ defined in \eqref{eq2}, this is equivalent to
\begin{subequations}
\begin{gather}
1 > \frac{\alpha g}{\alpha g +\alpha\alpha_i\beta}\\
\frac{\alpha_i(\mu_i+g)}{\alpha\mu}
> \frac{\alpha g}{\alpha g +\alpha\alpha_i\beta}
\end{gather}
\end{subequations}
}

$\bullet$
In order to construct I/S and I/O characteristics for system \eqref{eq6}, we now 
examine the stability of the equilibria of system \eqref{eq6} for any fixed 
value of $u\in\Rset_+$.
As shown by Theorem \ref{le1}, all trajectories are precompact.

$\bullet$
When $\beta\leq 0$, it has been previously established that for any $u\in\Rset$ 
there exists at most one equilibrium in ${\cal D}'$ to the I/O system 
\eqref{eq6}.
The strong monotonicity property of this system depicted above then implies that 
this equilibrium is globally attractive \citep[Theorem 
10.3]{hirsch1988stability}.
Therefore, system \eqref{eq6} possesses I/S and I/O characteristics.
As for any value of $u$, this equilibrium corresponds to zero output, the I/O 
characteristics is zero.
Applying the results of \citet{angeli2004interconnections}, one gets that the 
closed-loop system 
equilibrium $X_{MF}$ is an almost globally attracting equilibrium for system 
\eqref{eq0}.

$\bullet$
Let us now consider the case where $\beta> 0$.
We first show that the equilibrium point with $I_i=0, A_i=0$ and \eqref{eq11} is 
locally unstable.
Notice that this point is located on a branch of solution parametrized by $u$ 
and departing from $X_{MF}$ for $u=0$.
The Jacobian matrix \eqref{eq15a} taken at this point is
\begin{equation}
\begin{pmatrix}
-\alpha & 0 & 0 & -\frac{\mu\alpha\pi}{\delta\pi + \alpha u}\\
\delta & -\mu & 0 & 0\\
0 & 0 & -\alpha_i & \frac{\mu\alpha\pi}{\delta\pi + \alpha u}\\
0 & 0 & \delta & -(\mu_i+g)
\end{pmatrix}\ .
\end{equation}
This matrix is block triangular, with diagonal blocks
\begin{equation}
\begin{pmatrix}
-\alpha & 0\\
\delta & -\mu
\end{pmatrix}
\qquad \text{ and } \qquad
\begin{pmatrix}
-\alpha_i & \frac{\mu\alpha\pi}{\delta\pi + \alpha u}\\
\delta & -(\mu_i+g)
\end{pmatrix}\ .
\end{equation}
The first of them is clearly Hurwitz, while the second, whose characteristic 
polynomial is 
\begin{multline}
s^2+(\alpha_i+\mu_i+g)s +\alpha_i(\mu_i+g)-\frac{\mu\alpha\delta\pi}{\delta\pi + 
\alpha u}
= s^2+(\alpha_i+\mu_i+g)s -\alpha\alpha_i(\beta-u(\mu_i+g))\\
= s^2+(\alpha_i+\mu_i+g)s -\alpha\alpha_i(\mu_i+g)(u^*-u)
\end{multline}
(where $u^*$ is defined in \eqref{eq21a}) is not Hurwitz when $\beta>0$ and 
$0\leq u\leq u^*$, and has a positive root for $0<u<u^*$.
Therefore, the corresponding equilibrium of the I/O system \eqref{eq00} is 
unstable for these values of $u$.

The other solution, given as a function of $u$ by \eqref{eq26}, is located on a 
branch of solution parametrized by $u$ and departing from $X_{CO}$ for $u=0$.
As the other solution is unstable for $0 < u < u^*$, one can deduce from 
\citet[Theorem 10.3] {hirsch1988stability} that these solutions are locally 
asymptotically stable.

$\bullet$
One may now associate to any $u\in [0;u^*]$ the corresponding unique locally 
asymptotically stable equilibrium point, and the corresponding output value, 
defining therefore respectively an I/S characteristic $k_X$ and an I/O 
characteristic $k$ for system \eqref{eq00}.

For any scalar $u\in [0;u^*]$, for almost any $X_0\in{\cal D}'$, one has
\begin{equation}
\lim_{t\to +\infty} X(t;X_0,u) = k_X(u),\qquad
\lim_{t\to +\infty} y(t;X_0,u) = k(u)\ ,
\end{equation}
and, from the monotony properties, for any scalar-valued continuous function 
$u$, for almost any $X_0\in{\cal D}'$:
\begin{equation}
k\left(
\limsup_{t\to +\infty} u(t)
\right)
\leq \liminf_{t\to +\infty} y(t;X_0,u)
\leq \limsup_{t\to +\infty} y(t;X_0,u)
\leq k\left(
\liminf_{t\to +\infty} u(t)
\right)\ .
\end{equation}
Using the fact that $k$ is anti-monotone and that $u=y$ for the closed-loop 
system, one deduces, as e.g.\ in \citet{gouze1988criterion} that, for the 
solutions of the latter,
\begin{multline}
\label{eq40}
k^{2l} \left(
\liminf_{t\to +\infty} y(t;X_0,u)
\right)
\leq \liminf_{t\to +\infty} y(t;X_0,u)
\leq \limsup_{t\to +\infty} y(t;X_0,u)
\leq k^{2l}\left(
\limsup_{t\to +\infty} y(t;X_0,u)
\right)\ .
\end{multline}

Here $k(u)$, defined by \eqref{eq20}, is a linear decreasing map.
When its slope is smaller than 1, then the sequences in the left and right of 
\eqref{eq40} tend towards the fixed point that corresponds to the output value 
at $X=X_{CO}$, see \eqref{eq23}.

This slope value, see \eqref{eq20}, is equal to
\begin{equation}
\frac{\delta\pi g}{\alpha_i(\mu_i+g)} \frac{\mu_i+g}{\delta\pi\beta}
= \frac{1}{\alpha_i\beta}\ ,
\end{equation}
and it thus smaller than 1 if and only if $\beta> \frac{1}{\alpha_i}$, which is 
an hypothesis of the statement.

Under these assumptions, one then obtains that the $\liminf$ and $\limsup$ in 
\eqref{eq40} are equal, and thus that $y$, and thus $u$, possesses limit for 
$t\to +\infty$.
Moreover, the state itself converges towards the equilibrium $X_{CO}$ when $t\to 
+\infty$ for almost every initial conditions $X(0)$.
This achieves the proof of Theorem \ref{le2}.
\end{proof}

\begin{acknowledgements}
The authors would like to thank Funda\c{c}\~ao Getulio Vargas for financial 
support 
in the form of a scholarship to Joyce  de Figueir\'o Santos. They are also 
grateful for valuable comments by Moacyr A. H. Silva, Max O. Souza and Jair 
Koiler on an early version of the manuscript.
\end{acknowledgements}

\section*{Compliance with Ethical Standards}
Conflict of Interest: The authors declare that they have no conflict of 
interest.

\bibliographystyle{spbasic}      

\end{document}